\documentclass[conference]{IEEEtran}

\usepackage{times}

\usepackage{amsmath}
\usepackage{amsfonts}
\usepackage{latexsym}
\usepackage{amssymb}

\usepackage{upref}
\usepackage{theorem}
\usepackage{graphicx}
\usepackage{url}
\usepackage{subfigure}

\usepackage[left=2.5cm,top=2.5cm,right=2.5cm,bottom=1.5cm,]{geometry}


\theoremstyle{plain} \theorembodyfont{\normalfont\slshape}

\newtheorem{thm}{Theorem$\!$}
\newenvironment{theorem}
{\begin{thm}\hspace*{-1ex}{\bf.}}{\end{thm}}

\newtheorem{lem}[thm]{Lemma$\!$}

\newtheorem{prop}[thm]{Proposition$\!$}

\newtheorem{cor}[thm]{Corollary$\!$}

\newtheorem{defn}[thm]{Definition$\!$}

\newtheorem{xmpl}[thm]{Example$\!$}
\newenvironment{example}{\begin{xmpl}\hspace*{-1ex}{\bf.}}{\end{xmpl}}

\newtheorem{cnstr}[thm]{Construction$\!$}

\newtheorem{algr}[thm]{Algorithm$\!$}


\begin{document}

\title {Rebuilding for Array Codes in Distributed Storage Systems}
\author{
\authorblockN{Zhiying Wang}
\authorblockA{Electrical Engineering Department\\
California Institute of Technology \\
Pasadena, CA 91125\\
Email: zhiying@caltech.edu}
\and
\authorblockN{Alexandros G. Dimakis}
\authorblockA{Electrical Engineering Department\\
University of Southern California\\
 Los Angeles, CA 90089-2560\\
 Email: dimakis@usc.edu\\}
\and
\authorblockN{Jehoshua Bruck}
\authorblockA{Electrical Engineering Department\\
California Institute of Technology\\
Pasadena, CA 91125\\
Email: bruck@caltech.edu} }
\date{}
\maketitle

\begin{abstract}
In distributed storage systems that use coding, the issue of minimizing the communication required to rebuild 
a storage node after a failure arises. 
We consider the problem of repairing an erased node in a distributed storage system that uses an EVENODD code. 
EVENODD codes are maximum distance separable (MDS) array codes that are used to protect against erasures, and only require XOR operations for encoding and decoding. 
We show that when there are two redundancy nodes, to rebuild one erased systematic node, only $3/4$ of the information needs to be transmitted. Interestingly, in many cases, the required disk I/O is also minimized. 
\end{abstract}

\section{Introduction}

Coding techniques for storage systems have been used widely to protect data
against errors or erasure for CDs, DVDs, Blu-ray Discs, and SSDs. Assume
the data in a storage system is divided into packets of equal sizes. An
$(n,k)$ block code takes $k$ information packets and encodes them into a
total of $n$ packets of the same size. Among coding schemes, maximum distance
separable (MDS) codes offer maximal reliability for a given redundancy: any $k$ packets 
are sufficient to retrieve all the information. Reed-Solomon codes~\cite{RS} are the most well known MDS codes that  
are used widely in storage and communication applications. 
Another class of MDS codes are MDS array codes, for example EVENODD~\cite{evenodd} and its extension \cite{evenoddex}, B-code \cite{Bcode}, X-code \cite{xcode}, RDP \cite{RDP}, and STAR code \cite{star}. In an array code, each of the packets consists of a column of elements (one or more binary bits), and the parities are computed by XORing some information bits. These codes have the advantage of low computational complexity over RS codes because the encoding and decoding only involve XOR operations.

Distributed storage systems involving storage nodes connected over networks have recently 
attracted a lot of attention. MDS codes can be used for erasure protection in distributed storage systems
where encoded information is stored in a distributed manner. If no more than $n-k$ storage nodes are lost, then all the information can still be recovered from the surviving packets. Suppose one packet is erased, and instead of retrieving the entire $k$ packets of information, if we are only interested in repairing the lost packet, then what is smallest amount of transmission needed (called the \emph{repair bandwidth})? If we transmit $k$ packets from the other nodes to the erased one, then by the MDS property, we can certainly repair this node. But can we transmit less than $k$ packets? More generally, if no more than $n-k$ nodes are erased, what is the  repair bandwidth? This \emph{repair problem} was first raised in \cite{bound}, and was further studied in several works (e.g. \cite{repair1}-\nocite{repair2}\nocite{repair3}\nocite{repair4}\nocite{repair5}\cite{repair6}). A recent survey of this problem can be found in \cite{survey}. In \cite{bound}, a cut-set lower bound for repair bandwidth is derived and in \cite{repair3}\cite{repair4}\cite{repair5}, this lower bound is matched for exact repair by code constructions for $k=2,3$, $n-1$ and $2k \le n$. All of these constructions however require large finite fields.
Very recently it was established that the cut-set bound of~\cite{bound} is achievable for all values of $k$ and $n$, \cite{repair5}\cite{repair6}. However, the proof is theoretical and is based on very large finite fields. Hence, it does not provide the basis for constructing practical codes with small finite fields and high rate.

In this paper we take a different route: rather than trying to construct MDS codes that are easily repairable,
we try to find ways to repair existing codes and specifically focus on the families of MDS array codes.  
A related and independent work can be found in \cite{optRDP}, where single-disk recovery for RDP code was studied, and the recovery method and repair bandwidth is indeed similar to our result. Besides, \cite{optRDP} discussed balancing disk I/O reads in the recovery. Our work discusses the recovery of single or double disk recovery for EVENODD, X-code, STAR, and RDP code.  

If the whole data object stored has size $M$ bits, repairing a single erasure naively would require communicating (and reading) $M$ bits from surviving storage nodes. Here we show that a single failed 
systematic node can be rebuilt after communicating only $\frac{3}{4}M+ O(M^{1/2})$ bits. 
Note that the cut-set lower bound~\cite{bound} scales like $\frac{1}{2}M+O(M^{1/2})$, so 
it remains open if the repair communication for EVENODD codes can be further reduced. 
Interestingly our repair scheme also requires significantly less disk I/O reads compared to naively reading 
the whole data object. 

The rest of this paper is organized as follows. In Section \ref{def}, we are going to define EVENODD code and the repair problem. Then the repair of one lost node is presented in Section \ref{section evenodd} for EVENODD ($k=n-2$) and in Section \ref{section ex} for the extended EVENODD ($k<n-2$). In Section \ref{section 2erase}, we consider the case with two erased nodes and $k=n-3$. At last, conclusion is made in Section \ref{conclusion}.

\section{Definitions} \label{def}
An $R\times n$ array code contains $R$ rows and $n$ columns (or packets). Each element in the array can be a single bit or a block of bits. We are going to call an element a \emph{block}. In an $(n,k)$ array code, $k$ information columns, or systematic columns, are encoded into $n$ columns. The \emph{total amount of information} is $M=Rk$ blocks.

An EVENODD code \cite{evenodd} is a binary MDS array code that can correct up to 2 column erasures. For a prime number $p \ge 3$, the code contains $R=p-1$ rows and $n=p+2$ columns, where the first $k=p$ columns are information and the last two are parity. And the information is $M=(p-1)p$ blocks.

We will write an EVENODD code as:
$$\begin{array}{llllll}
a_{1,1} & a_{1,2} & \dots & a_{1,p} & b_{1,0} & b_{1,1}\\
a_{2,1} & a_{2,2} & \dots & a_{2,p} & b_{2,0} & b_{2,1}\\
\vdots & \vdots & 				& \vdots & \vdots &\vdots \\
a_{p-1, 1} & a_{p-1, 2} & \dots & a_{p-1, p} & b_{{p-1},0} & b_{{p-1},1}\\
\end{array}$$
And we define an imaginary row $a_{p,j}=0$, for all $j=1,2,\dots, p$, where $0$ is a block of zeros. The \emph{slope 0} or \emph{horizontal parity} is defined as 
\begin{equation} \label{eqn1}
b_{i,0} = \sum_{j=1}^{p} a_{i,j}
\end{equation}
for $i = 1,\dots,p-1$. The addition here is bit-by-bit XOR for two blocks. 
A parity block of \emph{slope} $v$, $-p < v < p$ and $v \neq 0$ is defined as
\begin{equation}\label{eq10}
b_{i,v}=\sum_{j=1}^{p} a_{j,<i+v(1-j)>}+S_v = \sum_{j=1}^{p} a_{<i+v(1-j)>,j} + S_v
\end{equation}
where 
$S_v = a_{p,1}+a_{p-v,2}+\dots+a_{<p+v>,p} = \sum_{j=1}^{p}a_{<v(1-j)>,j}$ 
and $<x> = (x-1)$ mod $p$ $+1$. Sometimes we omit the ``$<>$'' notation. When $v=1$, we call it the \emph{slope 1}, or \emph{diagonal parity}. In EVENODD, parity columns are of slopes 0 and 1.

A similar code is RDP \cite{RDP}, where $R=p-1$, $n=p+1$, and $k=p-1$, for a prime number $p$. The diagonal parity sums up both the corresponding information blocks and one horizontal parity block. Another related code is X-code \cite{xcode}, where the parity blocks are of slope -1 and 1, and are placed as two additional rows, instead of two parity columns.

The code in \cite{evenoddex} extended EVENODD to more than 2 columns of parity. This code has $n=p+r$, $k=p$, and $R=p-1$. The information columns are the same as EVENODD, but $r$ parity columns of slopes $0,1,\dots,r-1$ are used. 
It is shown in \cite{evenoddex} that such a code is MDS when $r \le 3$ and conditions for a code to be MDS are derived for $r \le 8$. 

STAR code \cite{star} is an MDS array code with $k=p,R=p-1,n=p+3$, and the parity columns are of slope 0, 1, and -1. 

A \emph{parity group} $B_{i,v}$ of slope $v$ contains a parity block $b_{i,v}$ and the information blocks in the sum in equations (\ref{eqn1}) (\ref{eq10}), $i = 1,2,\dots,p-1 $. $S_v$ is considered as a single information block. If $v=0$, it is a \emph{horizontal parity group}, and if $v=1$, we call it a \emph{diagonal parity group}. 

By (\ref{eqn1}), each horizontal parity group $B_{i,0}$ contains $a_{i,<k+1-i>}$ $\in$ $B_{k,1}$, for all $k=1,2,\dots,p-1$. So we say $B_{i,0}$ \emph{crosses with} $B_{k,1}$, for all $k=1,2,\dots,p-1$.  Conversely, each diagonal parity group $B_{i,1}$ contains $a_{k,<i+1-k>}$ $\in$ $B_{k,0}$, for all $k=1,2,\dots,p-1$. Therefore, $B_{i,1}$ crosses with $B_{k,0}$ for all $k=1,2,\dots,p-1$. The shared block of two parity groups is called the \emph{crossing}. Generally, two parity groups $B_{i,v}$ and $B_{k,u}$ cross, for $v \neq u$, $1 \le i,k \le p-1$. If they cross at $a_{p,<i+v>}=0$, we call it a \emph{zero crossing}. A zero crossing does not really exist since the $p$-th row is imaginary. A zero crossing occurs if and only if 
\begin{equation} \label{i+v}
u,v \neq 0 \textrm{ and } <i+v> = <k+u>
\end{equation}
Moreover, each information block belongs to only one parity group of slope $v$.

Suppose the $n$ packets are stored in $n$ different nodes in a connected network. Each storage node contains exactly one packet (or one column). 
Assume $n-d$ nodes are erased, $d \ge k$. Suppose we recover the nodes successively. For any specified erased node, how many blocks from the other storage nodes are needed to recover it? We can either send data in a single block, or a linear combination of several blocks in one node, both of which are counted as one block of transmission. The total number of blocks transmitted to recover the specified node is called the \emph{repair bandwidth} $\gamma$. The \emph{repair problem} for distributed storage system asks what the smallest $\gamma$ is, for fixed $M,d,k$. In \cite{bound}, a cut-set lower bound is derived (and is achieved only when each node transmits the same number of blocks):
\begin{equation} \label{eq3}
\gamma^* = \frac{Md}{k(d-k+1)}
\end{equation}

In this paper, we use MDS array codes as distributed storage codes. We will give repair methods and compute the corresponding bandwidth $\gamma$.

\begin{example}\label{example1}
Consider the EVENODD code with $p=3$. Set $a_{1,3}=a_{2,3}=0$ for all codewords, then the code will contain only 2 columns of information. The resulting code is a $(4,2)$ MDS code and this is called \emph{shortened} EVENODD (see Figure \ref{fig1}). It can be verified that if any node is erased, then sending 1 block from each of the other nodes is sufficient to recover it. And this actually matches the bound (\ref{eq3}). Figure \ref{fig1} shows how to recover the first or the fourth column. Notice that a sum block is sent in some cases. For instance, to recover the first column, the sum $b_{1,1}+b_{2,1}$ is sent from the fourth column.
\end{example}

\begin{figure}
	\centering
		\includegraphics[width=0.4\textwidth]{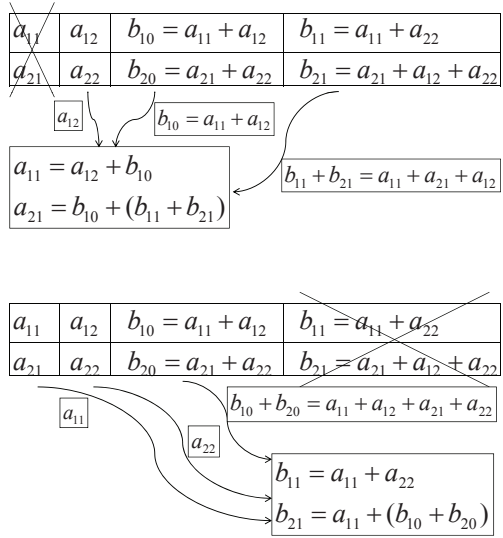}
	\caption{Repair of a $(4,2)$ EVENODD code if the first column (top graph) or the fourth column (bottom graph) is erased. In both cases, three blocks are transmitted.}
	\label{fig1}
	\vspace{-0.5cm}
\end{figure}

In this paper, shortening of a code is not considered and we will focus on the recovery of systematic nodes, given that 1 or 2 systematic nodes are erased. And we send no linear combinations of data except the sum $\sum_{i=1}^{p-1} b_{i,v}$ from the parity node of slope $v$, for all $v$ defined in an array code. In addition, we assume that each node can transmit a different number of blocks.

\section{Repair for Codes with 2 Parity Nodes} \label{section evenodd}
First, let us consider the repair problem of losing one systematic node, $n-d=1$, and $n-k=2$. We will use EVENODD to explain the repair method, and the recovery will be very similar if RDP or X-code is considered.

By the symmetry of the code, we assume that the first column is missing. Each block in the first column must be recovered through either the horizontal or the diagonal parity group including this block. Suppose we use $x$ horizontal parity groups and $p-1-x$ diagonal parity groups to recover the column, $0 \le x \le p-1$. These parity groups include all blocks of the first column exactly once. 

Notice that $S_1 = \sum_{i=1}^{p-1} b_{i,0} + \sum_{i=1}^{p-1} b_{i,1}$, so we can send $\sum_{i=1}^{p-1} b_{i,0}$ from the $(p+1)$-th node, and $\sum_{i=1}^{p-1} b_{i,1}$ from the $(p+2)$-th node, and recover $S_1$ with 2 blocks of transmission. For the discussion below, assume $S_1$ is known.

For each horizontal parity group $B_{i,0}$, we send $b_{i,0}$ and $a_{i,j}$, $j=2,3,\dots,p$. So we need $p$ blocks. 
For each diagonal parity group $B_{i,1}$, as $S_1$ is known, we send $b_{i,1}$ and $a_{j,<i+1-j>}$, $j=1,2,\dots,i-1,i+1,\dots,p-1$, which is $p-1$ blocks in total. 

If two parity groups cross at one block, there is no need to send this block twice. As shown in Section \ref{def}, any horizontal and any diagonal parity group cross at a block, and each block can be the crossing of two groups at most once.  There are $x(p-1-x)$ crossings. The total number of blocks sent is 
\begin{align} \label{eq4}
\gamma & =  \underbrace{x p}_{\textrm{horizontal}}+ \underbrace{(p-1-x) (p-1)}_{\textrm{diagonal}}  + \underbrace{2}_{S_1} - \underbrace{x (p-1-x)}_{\textrm{crossings}} \nonumber \\
& =  (p-1)p + 2 - (x+1)(p-1-x)  \\
& \ge  (p-1)p +2- (p^2-1)/4  =  (3p^2-4p+9)/4 \nonumber 
\end{align}
The equality holds when $x = (p-1)/2$ or $x= (p-3)/2$, where $x$ is an integer.

This result states that we only need to send about $3/4$ of the total amount of information. And the slopes of the $n$ chosen parity groups do not matter as long as half are horizontal and half are diagonal. Moreover, similar repair bandwidth can be achieved using RDP or X-code. 
 For RDP code, the repair bandwidth is 
$$\frac{3(p-1)^2}{4}$$
which was also derived independently in \cite{optRDP}.
For X-code, the repair bandwidth is at most
$$\frac{3p^2-2p+5}{4}$$

The derivation for RDP is the following. For RDP code, the first $p-1$ columns are information. The $p$-th column is the horizontal parity. The $(p+1)$-th column is the slope 1 diagonal parity (including the $p$-th column). The diagonal starting at $a_{p,1}=0$ is not included in any diagonal parities. Suppose the first column is erased. Each horizontal or diagonal parity group will require $p-1$ blocks of transmission. Every horizontal parity group crosses with every diagonal parity group. Suppose $(p-1)/2$ horizontal parity groups and $(p-1)/2$ diagonal parity groups are transmitted. Then the total transmission is
$$\gamma = \underbrace{(p-1)(p-1)}_{p-1 \textrm{ parity groups} } - \underbrace{\frac{p-1}{2}\frac{p-1}{2}}_{\textrm{crossings}} = \frac{3(p-1)^2}{4}$$
This result is also derived independently in \cite{optRDP}.

The derivation for X-code is as follows. For X-code, the $(p-1)$-th row is the parity of slope -1, excluding the $p$-th row. And the $p$-th row is the parity of slope 1, excluding the $(p-1)$-th row. 
Suppose the first column is erased. First notice that for each parity group, $p-2$ blocks need to be transmitted.
To recover the parity block $a_{p-1,1}$, one has to transmit the slope -1 parity group starting at $a_{p-1,1}$. To recover the parity block $a_{p,1}$, the slope 1 parity group starting at $a_{p,1}$ must be transmitted. But it should be noted that by the construction of X-code, this slope 1 parity group essentially is the diagonal starting at $a_{p-1,1}$, except for the first element $a_{p,1}$. Zero crossings happen between two parity groups of slopes -1 and 1, starting at $a_{i,1}$ and $a_{j,1}$, if 
$$<i+j> = p-2 \textrm{ or } <i+j> =p$$
Each slope 1 parity group has no more than 2 zero crossings with the slope -1 parity groups.

Suppose we choose arbitrarily $(p-1)/2$ slope 1 parity groups and $(p-3)/2$ slope -1 parity groups for the information blocks in the first column. Then not considering the parity group containing $a_{p,1}$, the number of slope 1 and slope -1 parity groups are both $(p-1)/2$. Excluding zero crossings, each slope 1 parity group crosses with at least 
$$(p-1)/2-2 = (p-5)/2$$
slope -1 parity groups. The total transmission is
$$\gamma \le \underbrace{p(p-2)}_{p \textrm{ parity groups}} - \underbrace{\frac{p-1}{2}\frac{p-5}{2}}_{\textrm{crossings}} =  \frac{3p^2-2p+5}{4}$$

Also, equation (\ref{eq4}) is optimal in some conditions:

\begin{theorem}
The transmission bandwidth in (\ref{eq4}) is optimal to recover a systematic node for EVENODD if no linear combinations are sent except $\sum_{i=1}^{p-1} b_{i,v}$, for $v=0,1$.
\end{theorem}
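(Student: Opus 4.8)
The plan is to recast a repair scheme in the restricted model as a linear-algebra problem and then run a combinatorial count that mirrors the achievability argument leading to~(\ref{eq4}). Since EVENODD is $\mathbb{F}_2$-linear and column~$1$ is the systematic node being repaired, an admissible scheme is a family $\mathcal{R}$ of linear functionals of the systematic blocks $\{a_{r,j}\}$ in which each member is either a single block $a_{r,j}$ with $j\ge 2$ or one of $b_{i,0}$, $b_{i,1}$, $\Sigma_0:=\sum_i b_{i,0}$, $\Sigma_1:=\sum_i b_{i,1}$; the bandwidth is $|\mathcal{R}|$, and we may assume $\mathcal{R}$ linearly independent so that $\gamma=|\mathcal{R}|$. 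Repair succeeds iff $\mathrm{span}(\mathcal{R})$ contains the coordinate functional $a_{i,1}$ for every $i=1,\dots,p-1$. First I would restrict all functionals to column~$1$: among the admissible generators only $b_{i,0}$ and $b_{i,1}$ (each meeting column~$1$ only in $a_{i,1}$) and $\Sigma_0,\Sigma_1$ (each meeting it in $a_{1,1}+\cdots+a_{p-1,1}$) have nonzero column-$1$ part, so the set of indices $i$ with $b_{i,0}\in\mathcal{R}$ or $b_{i,1}\in\mathcal{R}$ must be all of $\{1,\dots,p-1\}$, or all but one index with some $\Sigma_v\in\mathcal{R}$ to supply the missing coordinate.

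Next I would show each of the $p-1$ blocks of column~$1$ is recovered through one of the two parity groups containing it. Fix $i$ and write $a_{i,1}$ as a sum of members of $\mathcal{R}$: the column-$1$ part of that sum must equal the indicator of $\{i\}$, and its column-$\ge 2$ part must be expressible using only the single-block members of $\mathcal{R}$ — so every position occurring in it is actually transmitted. Put $i$ in class $X$ if the chosen expansion realizes the column-$1$ part by the lone generator $b_{i,0}$, and in class $Y$ otherwise (so $b_{i,1}$ is involved). Then $i\in X$ forces $b_{i,0}$ and $a_{i,2},\dots,a_{i,p}$ into $\mathcal{R}$, while $i\in Y$ forces $b_{i,1}$, the $p-2$ real off-target blocks of $B_{i,1}$, and $S_1\in\mathrm{span}(\mathcal{R})$. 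Setting $x=|X|$, the blocks forced by the row-fragments ($i\in X$) and by the diagonal-fragments ($i\in Y$) are internally disjoint and overlap one another only at crossings (Section~\ref{def}) — at most one per pair in $X\times Y$, hence at most $x(p-1-x)$ of them — so these fragments alone account for at least $xp+(p-1-x)(p-1)-x(p-1-x)$ distinct blocks. Adding the cost of acquiring $S_1$ when $Y\neq\emptyset$ (two blocks, $\Sigma_0$ and $\Sigma_1$, in all but boundary configurations), the sum equals the right-hand side of~(\ref{eq4}); minimizing~(\ref{eq4}) over integer $x$ gives $(3p^2-4p+9)/4$ as already noted before the theorem, with the boundary choices $x\in\{0,p-2,p-1\}$ verified separately to exceed this value.

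The main obstacle is the domination step inside the classification: proving that no cleverer expansion of some $a_{i,1}$ — one that balances its column-$1$ part using several $b_{k,0},b_{k,1}$ with $k\neq i$, or that spends a $\Sigma_v$ to cover a column-$1$ coordinate and then must cancel the large column-$\ge 2$ support $\Sigma_v$ drags in, or that supplies an off-column block through its own diagonal equation rather than receiving it — can be cheaper than the two canonical options once the sharing of transmitted blocks across the recoveries of different rows is credited. Tightly coupled to this is pinning down the true cost of $S_1$: it is $2$ via $\Sigma_0,\Sigma_1$, but it can be smaller when the $p-1$ blocks defining $S_1$ overlap heavily with row-fragments already transmitted for rows in $X$, and controlling this overlap is exactly what forces the separate treatment of the boundary values of $x$. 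Once these accounting points are settled, the remaining combinatorics is routine and reproduces~(\ref{eq4}) and its minimum.
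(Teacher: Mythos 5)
Your skeleton is the same as the paper's: for each row $i$ the only transmittable quantities whose column-$1$ restriction can help recover $a_{i,1}$ are $b_{i,0}$, $b_{i,1}$ and the two sums $\Sigma_0,\Sigma_1$; one then classifies rows by which parity group is used, counts the full parity groups, credits the crossings, and minimizes $(p-1)p+2-(x+1)(p-1-x)$ over $x$. Your first paragraph is in fact slightly more careful than the paper, since you notice that $\Sigma_0$ or $\Sigma_1$ could in principle supply one column-$1$ coordinate, a case the paper's proof silently ignores.

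The problem is that your argument stops exactly where the theorem begins. The step you yourself label ``the main obstacle'' --- that no expansion of $a_{i,1}$ other than the two canonical ones can be cheaper, and that the cost of $S_1$ really is $2$ --- is the entire content of the optimality claim, and you leave it as an acknowledged open point rather than proving it. As written, paragraphs one and two assert that $i\in X$ \emph{forces} $a_{i,2},\dots,a_{i,p}$ into $\mathcal{R}$, but under your own linear-algebra formulation this is not immediate: $a_{i,j}$ only needs to lie in $\mathrm{span}(\mathcal{R})$, not in $\mathcal{R}$. These loose ends do close, and cheaply: deriving a single off-column block $a_{k,j}$ through its other parity equation requires transmitting an entire additional parity group ($\ge p-1$ blocks to save one, and any sharing with already-transmitted fragments is bounded by the same crossing count already credited in~(\ref{eq4})); using $\Sigma_v$ to cover a column-$1$ coordinate forces the full support of $\sum_i b_{i,v}$ over columns $2,\dots,p$ into the span, which is all $(p-1)^2$ information blocks of those columns and is strictly worse; and the $p$ blocks defining $S_1$ lie in the imaginary row and the systematic columns, so they cannot be covered ``for free'' by row fragments in a way that undercuts the two-block cost of $\Sigma_0,\Sigma_1$. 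The paper dispatches all of this in one line (``given $b_{i,v}$, the whole parity group $B_{i,v}$ must be sent''), leaning on the hypothesis that only raw blocks and the two sums may be transmitted. To turn your proposal into a proof you must actually carry out the domination argument you describe; until you do, it is an outline with the decisive step missing.
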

\begin{proof} To recover a systematic node, say, the first node, parity blocks $b_{i,v}$, $i=1,2,\dots,p-1$ must be sent, where $v$ can be 0 or 1 for each $i$. This is because $a_{i,1}$ is only included in $b_{i,0}$ or $b_{i,1}$. Besides, given $b_{i,v}$, the whole parity group $B_{i,v}$ must be sent to recover the lost block. Therefore, our strategy of choosing $x$ horizontal parity groups and $p-1-x$ diagonal parity groups has the most efficient transmission. Finally, since (\ref{eq4}) is minimized over all possible $x$, it is optimal.
\end{proof}

The lower bound by (\ref{eq3}) is
$$\frac{Md}{(d-k+1)k}=\frac{M(n-1)}{(n-k)k} = \frac{p(p-1)(p+1)}{2p} = \frac{p^2-1}{2}$$
where $d=n-1$, $n=p+2$, $k=p$, and $M=p(p-1)$. It should be noted that (\ref{eq3}) assumes that each node sends the same number of blocks, but our method does not.

\begin{example}
Consider the EVENODD code with $p=5$ in Figure \ref{fig: 5evenodd}. For $1 \le i \le 4$, the code has information blocks $a_{i,j}$, $1 \le j \le 5$, and parity blocks $b_{i,v}$, $v=0,1$. Suppose the first column is lost. Then by (\ref{eq4}), we can choose parity groups $B_{1,0},B_{2,0},B_{3,1},B_{4,1}$. The blocks sent are: $\sum_{i=1}^{p-1} b_{i,0},\sum_{i=1}^{p-1} b_{i,1},b_{1,0},b_{2,0},b_{3,1},b_{4,1}$ from the parity nodes and $a_{1,2},a_{1,3},a_{1,4},a_{1,5},a_{2,2},a_{2,3},a_{2,4},$ $a_{2,5},a_{4,5},a_{3,2}$ from the systematic nodes. Altogether, we send 16 blocks, the number specified by (\ref{eq4}). We can see that $a_{1,3}$ is the crossing of $B_{1,0}$ and $B_{3,1}$. Similarly, $a_{1,4},a_{2,2},a_{2,3}$ are crossings and are only sent once for two parity groups. 
\end{example}

\begin{figure}
	\centering
		\includegraphics[width=0.3\textwidth]{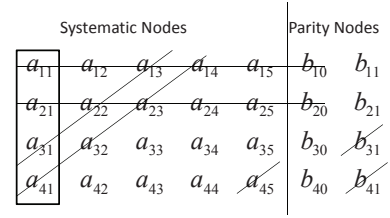}
	\caption{Repair of an EVENODD code with $p=5$. The first column is erased, shown in the box. 14 blocks are transmitted, shown by the blocks on the horizontal or diagonal lines. Each line (with wrap around) is a parity group. 2 blocks in summation form, $\sum_{i=1}^{p-1} b_{i,0},\sum_{i=1}^{p-1} b_{i,1}$ are also needed but are not shown in the graph.}
	\label{fig: 5evenodd}
	\vspace{-0.5cm}
\end{figure}

\section{$r$ Parity Nodes and One Erased Node}\label{section ex}
Next we discuss the repair of array codes with $r$ columns of parity, $r \ge 3$. And we consider the recovery in the case of one missing systematic column. In this section, we are going to use the extended EVENODD code \cite{evenoddex}, i.e. codes with parity columns of slopes $0,1,\dots,r-1$. Similar results can be derived for STAR code. Suppose the first column is erased without loss of generality. 

Let us first assume $r=3$, so the parity columns have slopes $0,1,2$. The repair strategy is: sending parity groups $B_{3n+v,v}$ for $v=0,1,2$ and $1 \le 3n+v \le p-1$. 

Let $A = \lfloor (p-1)/3 \rfloor$. Notice that $0 \le n \le A$ and each slope has no more than $\lceil(p-1)/3 \rceil$  but no less than $\lfloor (p-1)/3 \rfloor=A$ parity groups. 

Since there are three different slopes, there are crossings between slope 0 and 1, slope 1 and 2, and slope 2 and 0.
For any two parity groups $B_{i,1}$ and $B_{k,2}$, $<k-i> \neq 1$, so  (\ref{i+v}) does not hold. Hence no zero crossing exists for the chosen parity groups. Hence, every crossing corresponds to one block of saving in transmission. However, the total number of crossings is not equal to the sum of crossings between every two parity groups with different slopes. Three parity groups with slopes 0, 1, and 2 may share a common block, which should be subtracted from the sum.

Notice that the parity group $B_{i,v}$ contains the block $a_{i-vy,y+1}$. The modulo function ``$<>$'' is omitted in the subscripts. For three transmitted parity groups $B_{3n,0},B_{3m+1,1},B_{3l+2,2}$, if there is a common block in column $y+1$, then it is in row 
$3n \equiv 3m+1-y \equiv 3l+2-2y \quad (\textrm{mod } p)$.
To solve this, we get
$y \equiv 3(m-n)+1 \equiv 3(l-m)+1 \quad (\textrm{mod } p)$,
or $m-n \equiv l-m \quad (\textrm{mod } p)$. Notice $0 \le n, m, l <p/3$, so $-p/3 < m-n, l-m <  p/3 $. Therefore, $m-n = l-m$ without modulo $p$. Thus $l-n$ must be an even number. For fixed $n$, either $n \le m \le l \le A$, and there are no more than $(A-n)/2 + 1$ solutions for $(m,l)$; or $0\le l<m<n$, and the number of $(m,l)$ is no more than $n/2$. Hence, the number of $(n,m,l)$ is no more than 
$ \sum_{n=1}^{A} ((A-n)/2 + 1 + n/2) = A^2/2 + A$.

The total number of blocks in the $p-1$ chosen parity groups is less than $p(p-1)$. There are no less than $A$ parity groups of slope $v$, for all $0 \le v \le 2$, therefore for $0 \le u<v \le 2$, parity groups with slopes $u$ and $v$ have no less than $A^2$ crossings. Hence the total number of blocks sent in order to recover one column is:
\begin{eqnarray} \label{eq15}
 \gamma  & < & \underbrace{p(p-1)}_{p-1 \textrm{ parity groups}}-\underbrace{\binom{3}{2}
  A^2}_{\textrm{crossings}}+\underbrace{\frac{A^2+2A}{2}}_{\textrm{common}} +\underbrace{3}_{\sum_{i=1}^{p-1} b_{i,v}}  \nonumber \\
& < & \frac{13}{18}p^2 + \frac{17}{9} p - \frac{47}{18}  
\end{eqnarray}
where $(p-4)/3 < A \le (p-1)/3$. The above estimation is an upper bound because there may be better ways to assign the slopes of each parity group. Thus, we need to send no more than $13M/18$ blocks if $r=3$. 

By abuse of notation, we write $B_{m,v}=\{a_{<m+v(1-j)>,j}: j=2,\dots,p\}$ as the set of blocks (including the imaginary $p$-th row) in the parity group except $S_v$ and $a_{m,1}$. Let $M_v \subseteq \{1,2,\dots,q-1\}$, $0 \le v \le r-1$, be disjoint sets such that $\cup_{v=0}^{r-1} M_v = \{1,2,\dots,q-1\}$. Let $B_{M_{v},v} = \cup_{m \in M_{v}} B_{m,v}$. For given  $M_{v}$, define a function $f$ as 
$f(v_1,v_2,\dots,v_k) = |\{m_{1} \in M_{v_1}, \dots, m_{k} \in M_{v_{k}}: 
(m_2-m_1)/(v_2-v_1) \equiv (m_3-m_2)/(v_3-v_2) \equiv \dots (m_k-m_{k-1})/(v_k-v_{k-1}) \textrm{ mod } p \}|$,
for $k \ge 3$, and $0 \le v_1 < v_2 < \dots < v_{k} \le r-1$. Then we have the following theorem: 

\begin{theorem}
For the extended EVENODD with $r \ge 3$, the repair bandwidth for one erased systematic node is
\begin{eqnarray} \label{eq7}
\gamma & < & p(p-1)+p+r - \sum_{0\le v_1 < v_2 \le r-1} |M_{v_1}| |M_{v_2}| \nonumber \\
& & + \sum_{0 \le v_1 < v_2 < v_3 \le r-1}f(v_1,v_2,v_3)-\dots \nonumber \\
& &  + (-1)^{r-1}f(0,1,\dots,r-1)
\end{eqnarray}
\end{theorem}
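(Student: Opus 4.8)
The plan is to exhibit an explicit repair scheme indexed by the sets $M_0,\dots,M_{r-1}$ and to bound its bandwidth by an \emph{exact} inclusion--exclusion count over the parity groups it uses. Fix disjoint sets $M_0,\dots,M_{r-1}$ with $\bigcup_v M_v=\{1,\dots,p-1\}$, and for each $m\in M_v$ recover the lost block $a_{m,1}$ from the slope-$v$ parity relation $b_{m,v}=a_{m,1}+\sum_{j=2}^{p}a_{\langle m+v(1-j)\rangle,j}+S_v$. The scheme then transmits: (i) the information blocks forming $\bigcup_v B_{M_v,v}$, where blocks that fall in the imaginary $p$-th row are zero and need not be sent; (ii) one parity block $b_{m,v}$ for each of the $p-1$ chosen groups; and (iii) the $r$ column-sums $\sum_{i=1}^{p-1}b_{i,v}$, $0\le v\le r-1$, from which every $S_v$ is obtained via $S_v=\sum_i b_{i,0}+\sum_i b_{i,v}$ (which follows from the defining equations since $p-1$ is even). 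Hence $\gamma\le\bigl|\bigcup_v B_{M_v,v}\bigr|+(p-1)+r$.

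First I would compute $\bigl|\bigcup_v B_{M_v,v}\bigr|$ by inclusion--exclusion over the $p-1$ sets $B_{m,v}$. Two parity groups of the \emph{same} slope are disjoint, because each information block lies in exactly one parity group of a given slope; consequently only sub-collections of pairwise distinct slopes contribute, the expansion collapses to an alternating sum over subsets of the $r$ slopes, and it terminates automatically at $r$-wise terms. Since $\sum_v|B_{M_v,v}|=(p-1)\sum_v|M_v|=(p-1)^2$, the task reduces to evaluating, for $2\le k\le r$ and $0\le v_1<\cdots<v_k\le r-1$, the sum $\sum_{m_1\in M_{v_1},\dots,m_k\in M_{v_k}}\bigl|B_{m_1,v_1}\cap\cdots\cap B_{m_k,v_k}\bigr|$.

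This is the heart of the proof and the step I expect to be the main obstacle. Writing the block of $B_{m,v}$ in column $y+1$ as $a_{\langle m-vy\rangle,\,y+1}$ with $1\le y\le p-1$, the groups $B_{m_1,v_1},\dots,B_{m_k,v_k}$ share a common block exactly when a single $y$ satisfies $m_1-v_1y\equiv\cdots\equiv m_k-v_ky\pmod p$; equating consecutive terms forces $y\equiv(m_{i+1}-m_i)/(v_{i+1}-v_i)\pmod p$, and because $p$ is prime and $0<|v_{i+1}-v_i|\le r-1<p$ each such ratio is well defined, so $y$ is uniquely pinned down. Thus each such intersection has size $0$ or $1$, being $1$ precisely when all these ratios agree modulo $p$; moreover disjointness of the $M_v$ gives $m_1\neq m_2$, hence $y\not\equiv 0$, so the common block genuinely lies in columns $2,\dots,p$. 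Summing this $0/1$ indicator over the $m_i$ gives exactly $f(v_1,\dots,v_k)$ for $k\ge 3$, while for $k=2$ every pair $(m_1,m_2)$ with $m_1\neq m_2$ has a unique common block, so the term is simply $|M_{v_1}||M_{v_2}|$. Substituting into the inclusion--exclusion identity gives $\bigl|\bigcup_v B_{M_v,v}\bigr|=(p-1)^2-\sum_{v_1<v_2}|M_{v_1}||M_{v_2}|+\sum_{v_1<v_2<v_3}f(v_1,v_2,v_3)-\cdots+(-1)^{r-1}f(0,\dots,r-1)$, and adding the $(p-1)$ parity blocks and $r$ sums yields $\gamma\le p(p-1)+r-\sum_{v_1<v_2}|M_{v_1}||M_{v_2}|+\sum_{v_1<v_2<v_3}f(v_1,v_2,v_3)-\cdots$, which is strictly less than the right-hand side of (\ref{eq7}); the extra additive $p$ there, together with the imaginary-row blocks dropped in (i), is precisely the slack accounting for the strict inequality.

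The one genuinely delicate point is thus the exact evaluation of the $k$-wise intersections: it rests on the prime-field uniqueness of the solution $y$ (so that distinct-slope groups share at most one block) and on the disjointness of the $M_v$ (so that the shared block avoids the erased column $1$), after which the count equals $f$. The imaginary $p$-th row is harmless: it is carried through the inclusion--exclusion as if ordinary, so the formula for $\bigl|\bigcup_v B_{M_v,v}\bigr|$ is exact, and the zero blocks are omitted only at the very end, which can only decrease the bandwidth. Optimizing the choice of the $M_v$, as in the $r=3$ analysis behind (\ref{eq15}), then converts (\ref{eq7}) into an explicit numerical bound.
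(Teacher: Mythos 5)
Your proposal is correct and follows essentially the same route as the paper: transmit the parity groups indexed by the disjoint sets $M_v$, evaluate $\bigl|\bigcup_v B_{M_v,v}\bigr|$ by inclusion--exclusion, identify the pairwise terms as $|M_{v_1}||M_{v_2}|$ and the $k$-wise terms as $f(v_1,\dots,v_k)$ by solving the congruence system for the common column index, and add the parity and summation blocks. Your bookkeeping of the imaginary row is in fact slightly tighter than the paper's (which adds a $+p$ slack for zero crossings), so you obtain a marginally stronger bound by the same argument.
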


\begin{proof}
Suppose the first column is missing and we transmit the parity groups $B_{m,v}$, $m \in M_{v}$ for $v=0,1,\dots,r-1$. 
Since the union of $M_{v}$ covers $\{1,2,\dots,q-1\}$, all the blocks in the first column can be recovered. The repair bandwidth is the cardinality of the union of $B_{M_v,v}$ plus the number of zero crossings and the summation blocks $\sum_{i=1}^{p-1}b_{i,v}$. The number of zero crossings is no more than the size of the imaginary row, $p$. The number of the summation blocks is $r$. 

By inclusion--exclusion principle, the cardinality of the union of $B_{M_v,v}$ is
\begin{eqnarray*}
 \sum_{0 \le v \le r-1}|B_{M_v,v}| - \sum_{0 \le v_1 < v_2 \le  r-1}|B_{M_{v_1},v_1} \cap B_{M_{v_2},v_2}| \\
 + \sum_{0 \le v_1 < v_2 < v_3 \le r-1}|B_{M_{v_1},v_1} \cap B_{M_{v_2},v_2}\cap B_{M_{v_3},v_3}|\\
 -\dots  + (-1)^{r-1} |B_{M_{0},0} \cap B_{M_{1},1} \dots B_{M_{r-1},r-1}|
\end{eqnarray*}

Every $|B_{m,v}| \le p$, so $\sum_{0 \le v \le r-1}|B_{M_v,v}| \le p(p-1)$.
Every two parity groups $B_{m_1,v_1}, B_{m_2,v_2}$ cross at a block. Hence $|B_{M_{v_1},v_1} \cap B_{M_{v_2},v_2}| = |M_{v_1}| |M_{v_2}|$. Since $B_{m,v}$ contains $a_{<m+v(1-j)>,j}$, $j=2,\dots,p$, the intersection of more than two parity groups $B_{m_1,v_1},\dots,B_{m_k,v_k}$ is equivalent to the solutions of 
$$m_1 - v_1 y \equiv m_2 - v_2 y \equiv \dots \equiv m_k - v_k y \textrm{ mod } p$$ 
where $y+1$ is the column index of the intersection. Or,
$$y \equiv \frac{m_2-m_1}{v_2-v_1} \equiv \dots \equiv \frac{m_k-m_{k-1}}{v_{k}-v_{k-1}} \textrm{ mod } p$$
Therefore, 
$$|B_{M_{v_1},v_1} \cap B_{M_{v_2},v_2}\cap \dots B_{M_{v_k},v_k}|= f(v_1,v_2,\dots,v_k)$$
And (\ref{eq7}) follows.
\end{proof}

We can see that (\ref{eq15}) is a special case of (\ref{eq7}), with $M_{v}=\{3n+v: 1 \le 3n+v \le p-1\}$, for $v=0,1,2$. For $r=4,5$, we can derive similar bounds by defining $M_v$.

Choose 
\begin{equation} \label{eq16}
M_{v}=\{rn+v: 1 \le rn+v \le p-1\}
\end{equation}
for $v=0,1,\dots,r-1$. Let $A=\lfloor(p-1)/r \rfloor$. 
And for $0 \le v_1<v_2<v_3 \le r-1$, $f(v_1,v_2,v_3)$ becomes the number of $(n_1,n_2,n_3)$, $1 \le r n_i + v_i \le p-1$, such that 
$$(n_2-n_1)(v_3-v_2) \equiv (n_3-n_2)(v_2-v_1) \textrm{ mod } p$$
Since $-p/r < n_2-n_1, n_3-n_2 < p/r $, and $(v_3-v_2)+(v_2-v_1) < r$, the above equation becomes
$$(n_2-n_1)(v_3-v_2) = (n_3-n_2)(v_2-v_1)$$
without modulo $p$. Therefore,
\begin{eqnarray*}
&&n_3-n_1 = (n_3-n_2)+(n_2-n_1) \\
& =& c \cdot \textrm{lcm}(v_3-v_2,v_2-v_1) 
\left( \frac{1}{v_3-v_2} + \frac{1}{v_2-v_1} \right) \\
& =& c \frac{v_3-v_1}{\textrm{gcd}(v_3-v_2,v_2-v_1)}
\end{eqnarray*}
where $c$ is an integer constant, lcm is the least common multiplier and gcd is the greatest common divisor. And for fixed $n_1$, the number of solutions for $(n_2,n_3)$ is no more than $1+(A-n_1)\textrm{gcd}(v_3-v_2,v_2-v_1)/(v_3-v_1)$, when $n_1 \le n_2 \le n_3 \le A$; and no more than $n_1\textrm{gcd}(v_3-v_2,v_2-v_1)/(v_3-v_1)$, when $0 \le n_3 < n_2 < n_1$. The number of $(n_1,n_2,n_3)$ is 
\begin{eqnarray*}
f(v_1,v_2,v_3) < \sum_{n_1} 1+(A-n_1+n_1)  \frac{\textrm{gcd}(v_3-v_2,v_2-v_1)}{v_3-v_1} \\
=A \left(1+A\frac{\textrm{gcd}(v_3-v_2,v_2-v_1)}{v_3-v_1} \right)
\end{eqnarray*}

Similarly, for four parity groups,
$$f(v_1,v_2,v_3,v_4) >
A \left(1+(A+2)\frac{\textrm{gcd}(v_4-v_3,v_3-v_2,v_2-v_1)}{v_4-v_1} \right)$$
For five parity groups, 
$$f(v_1,v_2,v_3,v_4,v_5) <
A+A^2\frac{\textrm{gcd}(v_5-v_4,v_4-v_3,v_3-v_2,v_2-v_1)}{v_5-v_1}$$ 

When $r=4$, equation (\ref{eq7}) becomes
\begin{eqnarray*}
\gamma & < & p(p-1)+p+4 - \sum_{0\le v_1 < v_2 \le 3} |M_{v_1}| |M_{v_2}| \nonumber \\
& & + \sum_{0 \le v_1 < v_2 < v_3 \le 3}f(v_1,v_2,v_3) -f(0,1,2,3)
\end{eqnarray*}
By the previous equations, 
$$f(0,1,2), f(1,2,3) < A(1+A/2)$$
$$f(0,1,3), f(0,2,3) < A(1+A/3)$$
$$f(0,1,2,3)>A(1+(A+2)/3$$ 
And the repair bandwidth is 
$$\gamma \approx p^2 - \binom{4}{2}(\frac{p}{4})^2+(2 \times \frac{1}{2}+2 \times \frac{1}{3})(\frac{p}{4})^2 - \frac{1}{3} (\frac{p}{4})^2 = \frac{7}{24}p^2$$ 
where the terms of lower orders are omitted.

When $r=5$, we can use (\ref{eq7}) again and get 
$$\gamma \approx p^2 +(-\binom{5}{2} + \frac{4}{2}+\frac{4}{3} + \frac{2}{4} - \frac{2}{3} - \frac{3}{4} + \frac{1}{4})(\frac{p}{5})^2 = \frac{53}{75}p^2$$
where the terms of lower orders are omitted.

It should be noted that the number of common blocks affects the bandwidth a lot. If we consider only the first 4 terms in (\ref{eq7}), any assignment of $M_v$ with equal sizes will result in a lower bound of $\gamma > (r+1)p^2/(2r) \approx p^2/2$, when $r$ is large. But due to the common blocks, the true $\gamma$ values for $r=4,5$ using (\ref{eq16}) has only slight improvement compared to the case of $r=3$.

The lower bound (\ref{eq3}) is 
$\frac{Md}{k(d-k+1)}=\frac{p(p-1)(p+r-1)}{pr} \approx \frac{p(p+r-1)}{r}$. When $r=3$, this bound is about $p^2/3$.

\section{3 Parity Nodes and 2 Erased Nodes} \label{section 2erase}

Up to now, we have considered the recovery problem given that one column is erased. Next, let us assume that two information columns are erased and we need to recover them successively. So we first recover one of the erased nodes, and then the other one. The first recovery is discussed in this section, and the second recovery was already discussed in the previous sections. Suppose we have 3 columns of parity with slopes -1, 0, and 1, which is in fact the STAR code in \cite{star}. Again, the arguments can be applied to extended EVENODD in a similar way. Without loss of generality, assume the first and $(x+1)$-th columns are missing, $1 \le x \le p-1$. 

Let  $B_{i,0}$,$B_{i,1}$, and $B_{i,-1}$ be $i$-th parity group of slopes 0, 1, and -1, respectively, $i = 1,2,\dots,p-1 $. The following are $3(p-1)/2$ parity groups that repair the first column:
$B_{0,-1}, B_{x,0},B_{2x,1},B_{2x,-1},B_{3x,0},B_{4x,1},\dots, \\B_{(p-3)x,-1},  B_{(p-2)x,0}, B_{(p-1)x,1}$.
For each parity block above, the corresponding recovered blocks are:
$a_{x,1+x},\\ a_{x,1},a_{2x,1},a_{3x,1+x}, a_{3x,1},a_{4x,1},\dots,a_{(p-2)x,1+x}, \\
a_{(p-2)x,1}, a_{(p-1)x,1}$.
An example of $p=5,x=1$ is shown in Figure \ref{fig:star}.

Rearrange the columns in the following order: Columns $1,1+x,1+2x,\dots,1+(p-1)x$ (every index is computed modulo $p$). 
We can see that the chosen parity groups $B_{jx,0}$, $j=x,3x,\dots,(p-2)x$ contain the blocks in Rows $Z=\{x,3x,\dots,(p-2)x\}$. $B_{jx,1}$ contains blocks $a_{jx,1},a_{(j-1)x,1+x},\dots,a_{(j-p+1)x,1+(p-1)x}$, for $j=2,4,\dots,p-1$. And similarly $B_{jx,-1}$ contains blocks $a_{jx,1},a_{(j+1)x,1+x},\dots,a_{(j+p-1)x,1+(p-1)x}$, for $j=0,2,\dots,p-3$. 

Now notice that the blocks included in the above parity groups have the $(1+x)$-th column as the vertical symmetry axis. That is, the row indices of the blocks needed in Columns $1$ and $1+2x$ are the same; those of Columns $1+(p-1)x$ and $1+3x$ are the same; ...; those of Columns $1+(p+3)x/2$ and $1+(p+1)x/2$ are the same. For example, the second column in Figure \ref{fig:star} is the symmetry axis. Thus, we only need to consider Columns $1+2x,1+3x,\dots,1+(p+1)x/2$.  

For columns $1+ix$, where $i$ is even and $2 \le i \le (p+1)/2 $, parity groups $\{B_{2x,1},B_{4x,1},\dots,B_{(p-1)x,1}\}$ include the blocks in Rows $X=\{2x,4x,\dots,(p-1-i)x\}$. And parity groups $\{B_{0,-1},B_{2x,-1},\dots,B_{(p-3)x,-1}\}$ include the blocks in Rows $Y=\{ix,(i+2)x,\dots,(p-1)x\}$. Since $2 \le i \le (p+1)/2 $, we have $i \le (p-1-i)+2$, and $X \cup Y = \{2x,4x,\dots,(p-1)x\}$. Hence $X \cup Y \cup Z = \{1,2,\dots,p-1\}$. Thus every block in Column $1+ix$ needs to be sent, for even $i$. 

Similarly, for Columns $1+ix$, where $i$ is odd and $3 \le i \le (p+1)/2$, parity groups $\{B_{2x,1},B_{4x,1},\dots,B_{(p-1)x,1}\}$ include the blocks in Rows $X=\{(p-i+2)x,(p-i+4)x,\dots,(p-1)x\}$. Parity groups $\{B_{0,-1},B_{2x,-1},\dots,B_{(p-3)x,-1}\}$ include the blocks in Rows $Y=\{2x,4x,\dots,(i-3)x\}$. Since $2 \le i \le (p+1)/2 $, we have $i-3< p-i+2$, and $X \cup Y = \{2x,4x,\dots,(i-3)x,(p-i+2)x,(p-i+4)x,\dots,(p-1)x\}$. Therefore, the rows not included in $X$ or $Y$ or $Z$ are $W=\{(i-1)x,(i+1)x,\dots,(p-i)x\}$ and $|W|=(p+3)/2-i$. The total saving in block transmissions for all the columns is:
$$2  \sum_{i \textrm{ odd, } 3\le i\le (p+1)/2} (\frac{p+3}{2}-i) = 
\left\{\begin{array}{ll} 
\frac{(p-1)^2}{8}, &  \frac{p+1}{2} \textrm{ odd} \\
\frac{(p+1)(p-3)}{8},&  \frac{p+1}{2} \textrm{ even} \\
\end{array} 
\right. $$

The above argument can be summarized in the following theorem.
\begin{theorem}
When two systematic nodes are erased in a STAR code, there exist a strategy that transmit about $7/8$ of all the information blocks, and about $1/2$ of all the parity blocks so as to recover one node.
\end{theorem}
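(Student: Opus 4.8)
\emph{Proof plan.} The plan is to exhibit the explicit repair scheme described above, verify that it actually recovers the first column, and then count the transmitted information and parity blocks by a column-by-column argument after a convenient reindexing.

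First I would confirm correctness of the listed scheme. Each block $a_{jx,1}$ of the erased first column ($j=1,\dots,p-1$) lies on exactly one of the chosen parity groups of slope $0$, $1$, or $-1$, so it suffices to make every \emph{other} member of each chosen group available. The only unavailable members are: (i) the at most one entry of the second erased column $1+x$ lying on each chosen slope-$1$ group, which is supplied in advance by the ``helper'' slope-$(-1)$ groups $B_{0,-1},B_{2x,-1},\dots$ (whose recovered blocks $a_{x,1+x},a_{3x,1+x},\dots$ are precisely those entries), so the order of recovery must be checked to be acyclic; (ii) the parity blocks $b_{i,v}$ of the chosen groups, transmitted directly; and (iii) the quantities $S_{1},S_{-1}$ and the single corner parity $b_{0,-1}$ at the imaginary index, each obtainable from a lower-order number of further transmissions (sums of parity columns, exactly as $S_1=\sum_i b_{i,0}+\sum_i b_{i,1}$ was used in Section~\ref{section evenodd}). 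Since $3(p-1)/2$ parity blocks are transmitted out of the $3(p-1)$ total, this already yields the ``about $1/2$ of all parity blocks'' part.

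For the information blocks I would reorder the columns as $1,1+x,1+2x,\dots,1+(p-1)x$ modulo $p$, which turns the chosen slope-$0$ groups into the fixed row set $Z=\{x,3x,\dots,(p-2)x\}$ and the chosen slope-$1$ and slope-$(-1)$ groups into a forward and a backward ``staircase''. The key structural point is that the union of all chosen groups is symmetric under reflection about column $1+x$, so only columns $1+2x,1+3x,\dots,1+(p+1)x/2$ need to be analyzed, the rest being mirror images. For each such column one identifies the row sets $X$ (from the slope-$1$ staircase) and $Y$ (from the slope-$(-1)$ staircase): for even $i$ one checks $X\cup Y\cup Z=\{1,\dots,p-1\}$, so nothing is saved, whereas for odd $i$ the rows \emph{missing} from $X\cup Y\cup Z$ form the interval $W=\{(i-1)x,(i+1)x,\dots,(p-i)x\}$ of size $(p+3)/2-i$, and those blocks need not be sent. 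Doubling for the mirror columns, the total saving is $2\sum_{i\ \mathrm{odd},\ 3\le i\le(p+1)/2}\bigl((p+3)/2-i\bigr)$, which evaluates to $(p-1)^2/8$ or $(p+1)(p-3)/8$ according to the parity of $(p+1)/2$; in either case it equals $\tfrac18 M+O(p)$. Hence the number of transmitted information blocks is $M-\tfrac18 M+O(p)=\tfrac78 M+O(p)$, giving the ``about $7/8$'' part.

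The main obstacle is the bookkeeping in the last step: one must describe $X$ and $Y$ precisely enough that overlaps among the three slopes — including triple overlaps and members landing on the imaginary row or on an erased column — are handled correctly (the per-column row-set formulation is what makes this clean, since it automatically deduplicates), and then split the sum into the two parity cases. A secondary point needing care is the acyclicity in step (i): one must verify that each helper block $a_{(2j-1)x,1+x}$ is genuinely determined by its slope-$(-1)$ helper group from already-available data \emph{before} it is used in the slope-$1$ group that recovers a first-column block.
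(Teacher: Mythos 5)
Your proposal follows essentially the same route as the paper: the same choice of $3(p-1)/2$ parity groups, the same reindexing of columns by $1,1+x,\dots,1+(p-1)x$, the same symmetry about column $1+x$, the same per-column row sets $X,Y,Z$ with the even/odd case split, and the same savings sum $2\sum_{i\ \mathrm{odd}}\bigl(\tfrac{p+3}{2}-i\bigr)$ yielding the $7/8$ and $1/2$ fractions. The only additions are your explicit checks of recovery-order acyclicity and of how $S_{\pm 1}$ are obtained, which the paper leaves implicit but which do not change the argument.
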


The repair bandwidth $\gamma$ in the above theorem is about $7p^2/8$. Comparing it to the lower bound (\ref{eq3}), 
$\frac{Md}{k(d-k+1)} = \frac{p(p-1)(p+1)}{2p} \approx \frac{p^2}{2}$, 
we see a gap of $\frac{3p^2}{8}$ in total transmission. 

\begin{figure}
	\centering
		\includegraphics[width=0.2\textwidth]{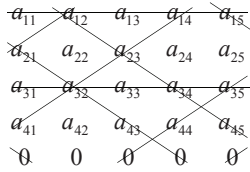}
		\caption{The recovery strategy for the first column in STAR code when the first and second columns are missing. $p=5$, $x=1$.}
	\label{fig:star}
	\vspace{-0.5cm}
\end{figure}

\section{Conclusions} \label{conclusion}
We presented an efficient way to repair one lost node in EVENODD codes and two lost nodes in STAR codes. Our achievable schemes outperform the naive method of rebuilding by reconstructing all the data. For EVENODD codes, a bandwidth of roughly $3M/4$ is sufficient to repair an erased systematic node. Moreover, if no linear combinations of bits are transmitted, the proposed repair method has optimal repair bandwidth with the sole exception of the sum of the parity nodes. Since array codes only operate on binary symbols, and our repair method involves no linear combination of content within a node except in the parity nodes, the proposed construction is computationally simple and also requires smaller disk I/O to read data during repairs. 

There are several open problems on using array codes for distributed storage. Although our scheme does not achieve  
the information theoretic cut-set bound, it is not clear if that bound is achievable for fixed code structures or limited field sizes. 
If we allow linear combinations of bits within each node, the optimal repair remains unknown. Our simulations indicate that shortening of EVENODD (using less than $p$ columns of information) further reduces the repair bandwidth  but proper shortening rules and repair methods need to be developed. Repairing other families of array codes or Reed-Solomon codes would also be of substantial practical interest.


\begin{thebibliography}{10}

\bibitem{RS}
I.~Reed and G.~Solomon.
\newblock Polynomial codes over certain finite fields.
\newblock {\em Journal of the SIAM}, 8(2):300--304, 1960.

\bibitem{evenodd}
M.~Blaum, J.~Brady, J.~Bruck, and J.~Menon.
\newblock {EVENODD}: an efficient scheme for tolerating double disk failures in
  raid architectures.
\newblock {\em IEEE Trans. on Computers}, 44(2):192--202, 1995.

\bibitem{evenoddex}
M.~Blaum, J.~Bruck, and A.~Vardy.
\newblock {MDS} array codes with independent parity symbols.
\newblock {\em IEEE Trans. on Information Theory}, 42(2):529--542, 1996.

\bibitem{Bcode}
L.~Xu, V.~Bohossian, J.~Bruck, and D.~G. Wagner.
\newblock Low-density {MDS} codes and factors of complete graphs.
\newblock {\em IEEE Trans. on Information Theory}, 45(6):1817--1826, 1999.

\bibitem{xcode}
L.~Xu and J.~Bruck.
\newblock {X-Code: MDS} array codes with optimal encoding.
\newblock {\em IEEE Trans. on Information Theory}, 45(1):272--275, 1999.

\bibitem{RDP}
P.~Corbett, B.~English, A.~Goel, T.~Grcanac, S.~Kleiman, J.~Leong, and
  S.~Sankar.
\newblock Row-diagonal parity for double disk failure correction.
\newblock In {\em Proc. of the 3rd USENIX Symposium on File and Storage
  Technologies (FAST '04)}, pages 1--14, 2004.

\bibitem{star}
C.~Huang and L.~Xu.
\newblock {STAR}: an efficient coding scheme for correcting triple storage node
  failures.
\newblock {\em IEEE Trans. on Computers}, 57(7):889--901, 2008.

\bibitem{bound}
A.~G. Dimakis, P.~G. Godfrey, Y.~Wu, M.~J. Wainwright, and K.~Ramchandran.
\newblock Network coding for distributed storage systems.
\newblock {\em IEEE Trans. on Information Theory}, to appear.

\bibitem{repair1}
Y.~Wu, A.~G. Dimakis, and K.~Ramchandran.
\newblock Deterministic regenerating codes for distributed storage.
\newblock In {\em Allerton Conference on Control, Computing, and
  Communication}, 2007.

\bibitem{repair2}
Y.~Wu.
\newblock Existence and construction of capacity-achieving network codes for
  distributed storage.
\newblock In {\em Proc. IEEE ISIT}, 2009.

\bibitem{repair3}
Y.~Wu and A.~G. Dimakis.
\newblock Reducing repair traffic for erasure coding-based storage via
  interference alignment.
\newblock In {\em Proc. IEEE ISIT}, 2009.

\bibitem{repair4}
D.~Cullina, A.~G. Dimakis, and T.~Ho.
\newblock Searching for minimum storage regenerating codes.
\newblock In {\em Allerton Conference on Control, Computing, and
  Communication}, 2009.

\bibitem{repair5}
C.~Suh and K.~Ramchandran.
\newblock Exact regeneration codes for distributed storage repair using
  interference alignment.
\newblock In {\em Proc. IEEE ISIT}, 2010.

\bibitem{repair6}
V.~R. Cadambe, S.~A. Jafar, and H.~Maleki.
\newblock Distributed data storage with minimum storage regenerating codes -
  exact and functional repair are asymptotically equally efficient.
\newblock \url{http://arxiv.org/pdf/1004.4299}.

\bibitem{survey}
A.~G. Dimakis, K.~Ramchandran, Y.~Wu, and C.~Suh.
\newblock A survey on network codes for distributed storage.
\newblock \url{http://arxiv.org/pdf/1004.4438}.

\bibitem{optRDP}
L.~Xiang, Y.~Xu, J.~C.S. Lui, and Q.~Chang.
\newblock Optimal recovery of single disk failure in {RDP} code storage
  systems.
\newblock {\em ACM SIGMETRICS Performance Evaluation Review}, 38(1):119--130,
  2010.

\end{thebibliography}

\end{document}